\newtheorem{proposition}{Proposition}
\newenvironment{proof}[1][Proof]{\noindent\textbf{#1.} }{\ \rule{0.5em}{0.5em}}
\newcolumntype{L}[1]{>{\raggedright\let\newline\\arraybackslash\hspace{0pt}}m{#1}}
\newcolumntype{C}[1]{>{\centering\let\newline\\arraybackslash\hspace{0pt}}m{#1}}
\newcolumntype{R}[1]{>{\raggedleft\let\newline\\arraybackslash\hspace{0pt}}m{#1}}
\title{Mechanism Design Approaches to Blockchain Consensus}
\author{Joshua S. Gans and Richard Holden}
\begin{document}

\maketitle

\begin{abstract}
   Blockchain consensus is a state whereby each node in a network agrees on the current state of the blockchain. Existing protocols achieve consensus via a contest or voting procedure to select one node as a dictator to propose new blocks. However, this procedure can still lead to potential attacks that make consensus harder to achieve or lead to coordination issues if multiple, competing chains (i.e., forks) are created with the potential that an untruthful fork might be selected. We explore the potential for mechanisms to be used to achieve consensus that are triggered when there is a dispute impeding consensus. Using the feature that nodes stake tokens in proof of stake (POS) protocols, we construct revelation mechanisms in which the unique (subgame perfect) equilibrium involves validating nodes propose truthful blocks using only the information that exists amongst all nodes. We construct operationally and computationally simple mechanisms under both Byzantine Fault Tolerance and a Longest Chain Rule, and discuss their robustness to attacks. Our perspective is that the use of simple mechanisms is an unexplored area of blockchain consensus and has the potential to mitigate known trade-offs and enhance scalability. \\
   
   \textit{Keywords}: subgame perfect implementation, blockchain, consensus, mechanism design, Byzantine fault tolerance
\end{abstract}

\vfill
\begin{footnotesize}
\noindent * Rotman School of Management, University of Toronto and NBER (Gans) and School of Economics, University of New South Wales (Holden). 
All correspondence to joshua.gans@utoronto.ca. The latest version of this paper is available at joshuagans.com. Thanks to Ethan Buchman and Scott Kominers for useful discussions and Raphael Mu for excellent research assistance. All errors remain our own. 
\end{footnotesize}
\newpage

\section{Introduction}

Blockchains rely on transaction messages being broadcast (to what is called a `mempool') where agents (`validators' or `miners') assemble them into a block that is appended to a public ledger. Those agents confirm that messages are valid (e.g., tokens transferred from a wallet actually are owned initially by that wallet) and the block is accepted as confirmed by other agents (i.e., that there is consensus regarding the validity of the proposed block). There are two challenges: (i) that consensus is reached and (ii) that consensus is over a set of truthful messages. For distributed ledgers, `truth' has a specific meaning: that the messages assembled into the block are those in the mempool without any being privately added by validators or intentionally excluded (or censored) by them.

Achieving distributed consensus is a challenge because of the dual goals of not having anything bad happen (\textit{safety}) and having something good happen (\textit{liveness}). Typically, improving the probability of one goal being achieved happens at the expense of reducing the probability that the other is achieved. Blockchains divide changes into blocks so that things can happen at a regular pace while they chain those blocks together through cryptographic hashing which assists in achieving safety as block manipulation requires the manipulation of other blocks. While this assists in achieving computing safety, practical reliance on block commits often builds in other buffers to further limit the ability to manipulate blocks in a way that impacts on outcomes outside of the ledger environment (e.g., double spending tokens). Thus, if methods can be found that improve reliance on the ledger, this improves \textit{effective safety} achieved. 

There are two broad methods of achieving consensus in blockchains. In each case, a \textit{serial dictatorship} model is used, whereby a block proposer is randomly selected from the pool of validators whose task it is to propose a block of transactions.\footnote{See \cite{gans2021consensus} for a review. } The first method, invented by \cite{nakamoto2008bitcoin}, is the \textit{longest chain rule} (LCR). This rule asks (but, importantly, does not require) validators to append blocks to the longest chain. So long as there is always a longest chain, this acts as a coordinating device.\footnote{For an analysis of potential coordination issues with LCR blockchains see \cite{biais2019blockchain} and \cite{barrera2018blockchain} for proof of work and \cite{saleh2021blockchain} for proof of stake blockchains.}  The second method relies on \textit{Byzantine Fault Tolerance} (BFT) whereby a proposed block is considered confirmed if at least two thirds of validators have sent a message `agreeing' to the proposed block (\cite{pease1980reaching}, \cite{buchman2016tendermint}, \cite{buterin2017casper}). Future proposed blocks must then be chained to the last confirmed block.\footnote{\cite{halaburda2021economic} show that when nodes are not presumed to be honest (or non-faulty) and consider their own payoffs, coordination problems can still arise.} It is generally the case that LCR blockchains achieve coordination in expectation faster (i.e., are more likely to satisfy liveness conditions) than BFT blockchains. 

While these methods achieve consensus, it is safe to say that consensus on the truth is left to be determined by crowd behaviour. For instance, in LCR blockchains, there can be forks where two equally long chains exist with the potential that at least one of them has untruthful blocks -- e.g., past blocks that exclude messages in order to facilitate a double spending attack. In BFT blockchains, attacks that subvert the operation of the blockchain by allowing consensus to be delayed are possible. In each case, so long as the majority or super-majority of participants (weighted by computational power in proof or work or token holdings in proof of stake) are engaging in truthful messaging -- that is, messages that reflect what has been broadcast to the mempool -- then truthful consensus can be achieved in equilibrium. Nonetheless, in trying to achieve coordination in this way, there is the potential for some disruption if adverse agents coordinated interventions. 

The question we address in this paper is whether there are more efficient and more reliable ways to achieve truth in consensus by designing and encoding mechanisms. Mechanism design is the branch of economics that deals with creating incentives for self-interested agents with information not known to the designer to reveal that information truthfully and still be willing to participate in the relevant economic activity. For example, auctions can be viewed as mechanisms for truthful revelation of the willingnesses to pay of buyers to a seller. Without a mechanism, buyers will not want to reveal their true value to a seller who might take advantage of that by charging a higher price. Similarly, without a mechanism, a seller has to guess buyer valuation when setting a price lest buyers choose not to buy or delay purchases. An auction, by specifying how a bid (a reflection of a buyer's true valuation) can be used, can be designed in such a way that the buyer has an incentive to tell the truth. This benefits buyers and the seller compared to a mechanism-less alternative but, critically, it relies on the mechanism being followed by the seller (\cite{akbarpour2020credible}) and bid information being communicated accurately (\cite{hurwicz1972informationally}, \cite{eliaz2002fault}).

Typically, mechanisms are conceived of as being designed and then implemented in a centralized manner. This would, on the face of it, put it at odds with being used in permissionless blockchains whose modus operandi is decentralized operation. However, while permissionless blockchains are often characterised as decentralized -- i.e., no one entity controls them and there is no one single point of failure -- a key part is centralized by virtue of there needing to be consensus on the state of the blockchain. Moreover, the blockchain's code is public and itself regarded as immutable. Thus, the ingredients for both a theoretical (i.e., to obtain truthful revelation using incentives) and practical use (i.e., transparent, unchangeable and unique code) are present for the use of mechanisms. This motivates our current examination of that possibility.

Our focus is on proof of stake protocols. In proof of work, the scope to deploy a mechanism does not exist because winning the computational game gives the proposer an unfettered right as to the block proposed. In proof of stake, by contrast, validating nodes are required to have committed an amount of tokens prior to potentially being selected to propose a node. That stake (technically, a bond) can then be used to provide incentives for any mechanism -- for instance, creating risk that if the node does not propose a truthful block that node will be worse off than had not participated in the mechanism at all. 

Indeed, for behaviors that can be readily identified as illegitimate -- such as proposing two conflicting blocks, being unavailable after promising to be available or proposing a block that isn't chained to the genesis block (or a checkpoint) -- proof of stake blockchains automatically fine misbehaving nodes in a process that is called `slashing' (\cite{buterinminimal}). This relies on the protocol designers identifying specific behaviours that may be associated with ill intent rather than something more straightforward and robust. Here our goal is to examine whether messages exist to ensure that validating nodes propose truthful blocks more generally using the information that exists amongst all nodes.

We construct an explicit mechanism under BFT and show that the unique equilibrium in this mechanism involves consensus on truthful blocks from any pair of nodes. In fact, this can be done with an arbitrarily small fine which, itself, does not occur on the equilibrium path. Moreover, under our mechanism there is no need for multiple rounds of confirmation to confirm a block--any randomly-selected pair of nodes suffices. Thus, it arguably has an efficiency property that improves the liveliness of these blockchains.

We also provide a bound for the share of the network that needs to be held by an attacker to succeed in a multi-node attack. A key insight here is that requiring nodes to send their messages before knowing the identity of the proposer strengthens this bound by an order of magnitude. In general, we conclude that our consensus mechanism is no more vulnerable to attacks than existing POW or POS protocols, and that the number of rounds the mechanism runs to confirm a block provides a kind of ``advance warning'' not provided by existing protocols.

For LCR blockchains, we offer an even simpler mechanism and show that in the unique (subgame perfect) equilibrium no dishonest forks arise. This stems from the fact that, under our mechanism, if a dishonest node is selected, they cannot get a transaction removed. This makes it suboptimal to dispute the transaction and the transaction is written to the blockchain. Given that, it is not worthwhile attempting the attack in the first place.

Our results rely on two features of the blockchain environment. First, because of cryptographic requirements associated with certain messages (such as a message sending tokens to another address), the message space is limited in certain important directions that limit the type of non-truths that can arise. Second, because of the public nature of parent blocks and the mempool, participating nodes know what the truth is. This second feature means that multi-stage mechanisms of the type examined by \cite{moore1988subgame} exist to ensure truthful revelation in equilibrium. Our task is to find mechanisms for specific blockchain environments. 

It is important to note that our purpose here is to show the potential benefits of mechanisms to aid in achieving truthful consensus on blockchains. Of necessity, we consider simplified environments and, thus, abstract away from practical difficulties associated with coding such mechanisms. However, we do believe that the broad framework we offer could be used as the foundation for practical implementation of mechanisms on blockchains and improve their operation.

While there has been much discussion of the use of mechanism design to inform aspects of the blockchain such as smart contracts (e.g., \cite{buterin2019flexible}, \cite{holden2021can}, \cite{gans2019fine}) only scant attention has been paid to blockchain consensus. For instance, \cite{leshno2020bitcoin} use a mechanism design approach to examine how nodes are selected to process transactions and receive block rewards and derive an impossibility result. \cite{garratt2022impossibility} examine truth-telling in blockchains and find another impossibility result. \cite{Roughgarden_2021} examines how mechanism design can improve transaction fee efficiency on blockchains. Finally, \cite{halaburda2021economic} do not look at mechanisms but look at economic incentives in BFT protocols. Here our approach is to examine whether explicit mechanisms can be used to substitute from other consensus resolving solutions in blockchains.

The remainder of the paper is organized as follows. Section 2 discusses mechanisms for Byzantine Fault Tolerance, while Section 3 turns attention to POS blockchains relying on the Longest Chain Rule. Section 4 contains some brief concluding remarks. 

\section{Mechanism for Byzantine Fault Tolerance}

The first broad consensus mechanism is proof of stake under Byzantine Fault Tolerance (or BFT). In the absence of an economic mechanism achieving consensus under BFT uses a voting mechanism. These voting mechanisms have the following steps:
\begin{enumerate}
    \item Transactions are broadcast as messages to the mempool
    \item Nodes stake and commit to be part of the validating pool
    \item Nodes observe messages 
    \item One node is selected to propose a block
    \item Other nodes choose whether to confirm or reject proposed block
    \item If at least two-thirds of nodes confirm the block it is accepted otherwise it is rejected and another node is selected to propose a block and the process begins again
\end{enumerate}
Typically, in voting to confirm a block nodes check the technical validity of the proposed block and also whether other nodes are confirming the same block. Thus, communication is multi-lateral and network-wide in the process of achieving consensus. Here we consider whether a mechanism can replace the voting process and limit communication to just two randomly chosen nodes before appending a new block to the chain. 

\subsection{A Simultaneous Report Mechanism}

The mechanism we propose is a special case of the Simultaneous Report (SR) Mechanism analysed by \cite{chen2018getting}.\footnote{The SR mechanism is a simplification of the multi-stage mechanisms explored by \cite{moore1988subgame}.} The baseline idea is that messages are broadcast publicly by blockchain users to the network and participating nodes assemble them into blocks of a fixed size based on time broadcast. When a block is proposed to be committed to the blockchain, each node has in their possession a block of messages they have received. We assume that this block is common across all nodes, however, there are no restrictions on nodes in proposing an alternative block. The goal is to ensure that nodes, while able to propose alternative blocks, only propose and accept truthful blocks.

Suppose there are nodes, $i \in \{1,...,n-1\}$ each of whom assemble ledger entries into a block of fixed size. If a node ends up proposing a block that is accepted, they receive a block reward, $R$. There is also an $n$th node who proposes a manipulated block. If that block is accepted, they receive a payoff $\theta$ that is private information in addition to the block reward, $R$. Nodes can send any message from a countably infinite set.

Consider the following mechanism that is run after messages have been sent to the mempool: 
\begin{enumerate}
    \item One node is randomly chosen to be the \textit{proposer}, $p$, and another node, $c$, is chosen to be the confirmer.
    \item The proposer proposes a block in the form of a message, $M_p$ while the confirmer sends a message, $M_c$. 
    \item If $M_p = M_c$, then the block is committed to and added to the blockchain. The proposer receives $R$.
    \item If $M_p \neq M_c$, then the challenge stage begins with both $p$ and $c$ being fined, $F > 0$. 
\end{enumerate}

The \textbf{challenge stage} involves:
\begin{enumerate}
    \item $p$ sends a new message $M^C_p$ based on knowledge that there is a disagreement.
    \item If $M^C_p = M_c$, then $M^C_p$ is committed to the blockchain, $p$ receives $R$, and $c$ is refunded $F$.
    \item If $M^C_p \neq M_c$, then $p$'s proposal is discarded and the process begins again with $p$ and $c$ excluded from subsequent rounds.
\end{enumerate}
Given this, we can prove the following:
\begin{proposition}
Suppose the true block is $M_T$. Then the unique subgame perfect equilibrium outcome for the mechanism for any pair of nodes is $M_p=M_c=M_T$.
\end{proposition}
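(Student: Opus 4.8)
The plan is to solve the two-stage game by backward induction and then organize the argument around which, if either, of the two selected nodes is the strategic manipulator, node $n$. Throughout I would use the two environmental facts stressed earlier: both selected nodes know the true block $M_T$ (the mempool and parent blocks are public), and the only node whose payoff depends on the \emph{content} of the committed block, beyond the reward $R$ and the fine $F$, is node $n$, who collects the private $\theta>0$ exactly when a manipulated block is committed. The honest nodes $1,\dots,n-1$ are purely reward-motivated.

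I would begin at the innermost node of the tree, the challenge subgame. Here the proposer $p$ has already paid $F$ and chooses $M^C_p$ against a known $M_c$. Matching ($M^C_p=M_c$) commits $M_c$ and returns $R$, for a continuation payoff of $R-F$; refusing ($M^C_p\neq M_c$) discards the proposal, forfeits $R$, and excludes $p$ from all later rounds, for a payoff of $-F$ plus the (weakly lower) continuation value of exclusion. Since $R>0$, any proposer who cares only about $R$ and $F$ strictly concedes. The consequence I would record is that, for a reward-motivated proposer, the block ultimately committed equals the confirmer's first-stage report $M_c$, reached either by agreement at stage one or by concession in the challenge.

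With that reduction I would take the three role configurations in turn. When neither selected node is $n$, both are honest, both report $M_T$, and agreement commits $M_T$ at once. When $n$ is the \emph{proposer}, the honest confirmer reports $M_c=M_T$; by the challenge analysis any manipulation $M_p\neq M_T$ is dragged back to $M_T$ after paying $F$, so proposing $M_T$ outright (earning $R$) strictly dominates proposing a manipulation (earning $R-F$), and $\theta$ is never collected. The delicate configuration is when $n$ is the \emph{confirmer}: because the committed block tracks $M_c$, I must rule out $n$ reporting a manipulated $M_c=M_m$ and harvesting $\theta$ after the honest proposer concedes.

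Closing that last case, and with it establishing \emph{uniqueness}, is where I expect the real work to lie. Two ingredients are available. First, I would invoke the cryptographic restriction on the message space: the manipulations $M_m$ that can be passed off as an admissible block are limited, so an honest proposer who sees a false $M_c$ in the challenge does not face the clean $R-F>-F$ comparison used above and does not concede to $M_m$, which forces the discard branch and leaves node $n$ with $-F$ and no $\theta$; reporting $M_T$ is then strictly better for the confirmer. Second, I would knock out coordinated ``babbling'' equilibria in which both honest nodes select some common $M'\neq M_T$: the arbitrarily small fine $F$, although off path, makes truthful reporting a (weakly) best response for every honest node and strictly so against any partner who might report $M_T$, so no profitable joint deviation to a false block survives subgame perfection. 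Assembling the three cases then delivers $M_p=M_c=M_T$ as the unique subgame perfect equilibrium outcome for every selected pair. The hardest part of the write-up will be making the confirmer-manipulator argument airtight using only the message-space restriction, since a naive reward-only proposer would concede to the manipulation; the proof must show that the admissible deviations for a would-be confirmer attacker never survive the challenge.
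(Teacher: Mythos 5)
Your backward induction starts from a premise the mechanism does not grant you: that in the challenge stage the proposer chooses $M^C_p$ \emph{against a known} $M_c$. In the paper's mechanism the proposer learns only \emph{that} there is a disagreement, not what the confirmer sent (the challenge message is sent ``based on knowledge that there is a disagreement,'' and the paper is explicit elsewhere that nodes cannot see each other's messages, only whether they match). Because the message space is countably infinite and $M_T$ is the only message that is common knowledge, a challenge-stage proposer has exactly one message with positive probability of matching an unseen $M_c$, namely $M_T$; any attempt to ``concede'' to an unknown false report fails with probability one. Consequently your central reduction --- that the committed block always tracks the confirmer's first-stage report --- is false, and the ``delicate configuration'' you identify (node $n$ as confirmer reporting a manipulated $M_m$ and harvesting $\theta$ after the honest proposer concedes) never arises: the proposer re-proposes $M_T$, fails to match $M_m$, the proposal is discarded, and the deviating confirmer eats the fine $F$. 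That is the paper's entire argument for why the confirmer reports the truth, and it needs no appeal to cryptographic admissibility restrictions. Your proposed patch via such restrictions is not available anyway: the model explicitly lets nodes send any message from a countably infinite set, and the manipulated block node $n$ covets is itself an admissible block, so under your reading of the information structure the mechanism would genuinely fail rather than be rescued.

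A secondary point: your worry about coordinated equilibria on a common $M'\neq M_T$ is a fair objection to the paper's somewhat informal uniqueness claim, but your resolution via the off-path fine is a risk-dominance style argument, not subgame perfection. The paper's implicit answer is again informational: privately preferred blocks are unknown beyond individual nodes, so $M_T$ is the unique message on which independently chosen reports can match with positive probability, and no non-true coordination survives.
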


\begin{proof}
Suppose that the selected pair does not include node $n$. Then working backwards, if $M_p \neq M_c$, then $M_p \neq M_T$, $M_c \neq M_T$ or both as the message space is a (countably) infinite set. In this case, the challenge stage is initiated and $p$ has the opportunity to send a new message. If $M_p = M_T$, then there is zero probability that $p$ could send $M^C_p = M_c$ and so $p$'s proposal is discarded and both nodes receive $-F$. If $M_p \neq M_T$, then by selecting $M^C_p = M_T$, then with some probability (possibly equal to 1), $p$ receives $R-F$ rather than $-F$ with certainty by choosing some other message. Thus, in the challenge stage $M^C_p = M_T$. Anticipating this, it is optimal for $p$ to set $M_p = M_T$ and $c$ to set $M_c = M_T$.

Now suppose that the selected pair includes node $n$. If node $n$ is the confirmer and the challenge stage is reached, then, we have already shown that the proposer will set $M^C_p = M_T$. Given this, node $n$ will find it optimal to set $M_c = M_T$ and earn $0$ rather than $-F$.

Alternatively, if node $n$ is the proposer, by our earlier argument, the other node will set $M_c = M_T$. If $n$ sets $M_p \neq M_T$, then there is a challenge round. In that round, $n$ will earn $R-F$ by setting $M^C_p = M_T$ and $-F$ otherwise. Given this, it is optimal for $n$ to set $M_p = M_T$ as it will earn $R$ rather than $R-F$. 
\end{proof}
\\

\noindent It is easy to see that in the challenge stage, if $M_c$ is the truth, $p$ knows this and so finds it worthwhile to set $M^C_p = M_c$ and receive $R$. If $M_p$ is the truth, $p$ has a problem as it does not know what $M_c$ was. In this case, it ends up setting $M^C_p \neq M_c$ and receiving $0$. Thus, the truth is revealed regardless. There seems something harsh about this last step as the proposer may be reasonable and still punished. That would arise only if $c$ has an incentive to message something other than the truth. If they message the truth, then they get $0$ as they expect $p$ to revise their message at least and receive a refund of $F$. If they message something else, then $p$ will never be able to guess that and so they will lose $F$. Thus, $c$ has no incentive to do anything other than be straightforward.

It is useful to stress how remarkably powerful this mechanism is for obtaining consensus on truthful blocks. We note the following:
\begin{itemize}
    \item $F$ can be arbitrarily small and the true block will be confirmed by any pair. 
    \item Any randomly-selected pair is sufficient to confirm a block. Unlike BFT mechanisms, there is no need for multiple confirmation rounds, pre-commits or messages sent from more than two nodes. Once block transactions have been communicated publicly and formed into the truthful block, the mechanism can take place and confirmation is instantaneous.
    \item If there is more than one node who has a private value that arises should a block other than the truthful block occur, the true block will still be confirmed. This outcome occurs even if two nodes with private preferences happen to be paired. This will happen so long as those nodes have preferences for distinct blocks. However, as we explore in the next section, if nodes have preferences for the same non-true block, the game is more complex (e.g., if there is a coalition of nodes).
    \item A key part of the mechanism is that while all nodes have common knowledge of the message for the true block, privately preferred blocks are unknown beyond individual nodes. This subverts any method by which coordination could arise on a non-true block. Once again, a coalition of nodes with a non-true preferred block could potentially coordinate and subvert the mechanism.
\end{itemize}

\noindent It is worth emphasising that the mechanism does rely critically on the messages of the true block being perfect and common knowledge. If this was not the case, the proof would be more complicated but we conjecture that it will still hold given the results of \cite{chen2018getting} that show that SR mechanisms are robust to some informational imperfections.

Finally, it is useful to note some practicalities in terms of implementing this mechanism. The mechanism relies on the two nodes being selected randomly. As will be discussed in detail below, randomness plays an important role in the mechanism working when there is more than one attacking node. Finding a pure randomisation device on-chain is a challenge for blockchains and, thus, we expect that it is likely that this part of the mechanism will rely on an external randomisation input. The mechanism itself can itself run on-chain but it would be as a smart contract coded into the protocol. This is something that is a feature of many proof-of-stake protocols for other elements. 

\subsection{Robustness to Multi-Node Attacks}

As noted above, while the proposer, if selected, has an incentive to tell the truth this is based on a specific assumption that could not be guaranteed for a permissionless blockchain (and maybe not all permissioned ones either): that the proposer and confirmer are different entities. What if the proposer and confirmer are the same person or part of an attacking coalition that share the same incentive to confirm an alternative, non-true block? 

Suppose that an attacking coalition has a share, $s$, of all nodes.  If they are the proposer, then, with probability $s$ they will be able to confirm the distorted block and receive $R+\theta$ and receive $-F$ otherwise. If they do not distort, they receive $R$ with certainty. (This assumes that the attacker has full knowledge of the fact that they are the proposer before setting $M_c$). Thus, the proposer will try to attack if:
\begin{equation}
s(R+\theta)-(1-s)F > R  \implies s > \frac{R+F}{R+\theta+F} \label{attack1}
\end{equation}
From this, it can be seen that this mechanism is not robust to an attack if the attacker has sufficient share ($s$) of the network.

We can compare this threshold to that typically considered for BFT networks. Attacks that may delay the confirmation of transactions are not possible in those networks if $s < \frac{1}{3}$. Notice here that the SR mechanism lowers this possibility if $R+F > \frac{\theta}{2}$. Thus, depending on the environment, the consensus protocol proposed here may be more secure than the usual BFT protocol. Moreover, security can be enhanced by increasing $R+F$ rather than exogenously set as it is under usual BFT consensus.

If we require nodes to send their messages \textit{before} knowing who is a proposer, then this slightly changes the equation. Now the attacker only succeeds with probability $s^2$ but also has an additional cost in that their confirmer role automatically leads to a fine. Thus, an attacker's choice depends on:
\[s^2(R+\theta)-s(1-s)F-(1-s)F > R\]
and so the threshold becomes $s > \sqrt{\frac{R+F}{R+\theta+F}}$ which is an order of magnitude stronger (modulo, integer issues which we ignore here).\footnote{In this case, the SR mechanism is more secure than the usual BFT consensus if $R+F > \frac{\theta}{8}$.}

\subsection{Patient Multi-Node Attacks}

The above calculations assume that an attack is considered by the attacker to be a once-off opportunity. This certainly is the case if, in the next round, truthful consensus is reached and the opportunity for an attack is removed. However, when there are multiple nodes, the incentive compatibility conditions in our proposed mechanism need to be reformulated for the possibility that, should consensus not be reached with one pair, at least one additional round of the mechanism will result with a new pair of nodes. 

To consider this, note the following:
\begin{itemize}
    \item Truthful consensus will arise if both selected nodes are honest (i.e., have no distortion payoff, $\theta$). At the outset, this happens with probability $(1-s)^2$;
    \item Truthful consensus may arise if a proposer is honest and a confirmer is potentially not. Recall that in the mechanism, the proposer will propose an honest block and continue to do so in a challenge round. Thus, if the confirmer proposes anything other than the true block, they are fined $F$ and another pair is selected. Let $V_{sN-1,(1-s)N-1}$  be the expected payoff to the attacking coalition if there are $sN-1$ remaining coalition nodes and $(1-s)N-1$ remaining honest nodes. Then the incentive compatibility constraint for the confirmer is $V_{sN-1,(1-s)N-1} < F$. This scenario occurs, initially, with probability $(1-s)s$.
    \item What happens if the proposer is not honest? Recall that nodes cannot see each others messages in the mechanism -- only if they match or not. Thus, if a proposer of a distorted block is matched with an honest node, they will have a choice as to whether to adjust their message or not. In effect, they can either continue the disagreement or confirm the true block (under the assumption that the node they are paired with must be honest). Thus, their incentive compatibility constraint in the challenge stage is to propose an honest block if $V_{sN-1,(1-s)N-1} < 0$. If this constraint is satisfied, then a proposer will propose a distorted block only if $s(R+\theta) - (1-s)F \ge R$. If this constraint is not satisfied, then the proposer will propose a distorted block only if $s(R+\theta) - (1-s)(F-V_{sN-1,(1-s)N-1}) \ge R$.
\end{itemize}
Examining these conditions, therefore, requires solving for $V_{sN-1,(1-s)N-1}$.

Exploring $V_{sN-1,(1-s)N-1}$, note that:
\begin{multline}
    V_{sN-1,(1-s)N-1} = \tfrac{sN-1}{N-2}\Big( \max\{\tfrac{sN-1}{N-2}(R+\theta)-\tfrac{(1-s)N-1}{N-2}(F-V_{sN-2,(1-s)N-2}),R\} \\
    + \tfrac{(1-s)N-1}{N-2}\max\{V_{sN-2,(1-s)N-2}-F,0\}\Big)
\end{multline}
The first probability is the probability that a potential attacker is selected as a proposer or confirmer. If they are a proposer or confirmer, the outcome depends upon whether their incentive compatibility constraint is satisfied or not. These are the next two terms within the brackets. Note that each of these depend upon $V_{sN-2,(1-s)N-2}$ which is the attacking coalition's expected payoff should no consensus be reached in the round. 

An important property of the recursive structure here is that, for $s \le \frac{1}{2}$, $V_{sN-1,(1-s)N-1} \ge V_{sN-2,(1-s)N-2}$ and this property continues between rounds if no consensus is reached. In this case, if the incentive compatibility constraint is satisfied for the proposer in the first round, it will continue to be satisfied and, thus, there will be no incentive for the attacker to start or continue the attack so long as $s < \frac{R+F}{R+\theta+F}$.

What if $s > \frac{1}{2}$? In this case, the expected payoff from an attack could rise between rounds. Indeed, if the penultimate round is reached, then the attacker knows they will have both the proposer and confirmer with certainty in the final round allowing them to confirm the distorted block and receive a certain payoff of $R+\theta$. In this case, in the penultimate round, there is one honest node and, say, $x$ nodes of the coalition. The incentive compatibility condition for the penultimate round would have to be such that $\frac{x}{x+1}(R+\theta)-\frac{1}{x+1}(F-R-\theta) < R$ or $(x+1)\theta < F$. Note that this is equivalent to  $s < \frac{F}{\theta}-\frac{1}{2}$ as $s=\frac{1+2x}{2}$ or $x=\frac{2s-1}{2}$. It can be seen that so long as $F > \frac{3}{2}\theta$, then an attack is never worthwhile.

This makes it appear that $F$ has to be very large. In fact, the protocol could adjust $F$ so that it \textit{increases with each round}. Precisely what this formula would be, however, is a complex matter and we have not been able to calculate it yet. In any case, as $\theta$ is a free variable calibrating it to that would be a challenge.

That said, the conditions here are similar to those regarding so-called ``majority" attacks under both POW and POS. The difference is rather than building out forks of existing chains, here the attack conditions take place on a block by block basis. The effect is the same, as is the cost of an attack in terms of resources -- real or financial (see \cite{gans2021consensus}). Our contention would be that our proposed consensus mechanism is not more vulnerable than existing ones but that it has the advantages that (a) it is significantly more efficient to run and operate under normal conditions; (b) continues to make an attack probabilistically difficult including forked-chain attacks and (c) can be indicated by the number of rounds a mechanism needs to run to confirm a block -- that is, in the absence of an attack, there will be few rounds while an attack with a close to 51 percent majority will still likely take many rounds. We believe that it would be possible to calculate these probabilities more precisely but that is left for future work.

\section{Mechanism to Resolve Forks}

Unlike the BFT protocol, POS blockchains that rely on the LCR expect forks to arise.\footnote{While we don't explore this possibility in this paper, BFT blockchains do sometimes generate forks which require lengthy and costly human intervention and coordination to resolve. Our fork resolution mechanism could be alternatively used even in conjunction with BFT consensus; although we only explore its use in LCR type consensus here.} A fork is a situation where two chains of equal length are confirmed to the protocol that have a common ancestor but distinct blocks thereafter. When this occurs nodes must decide which one to append new blocks to so that a new longest chain arises which becomes the consensus chain. Because there is no voting protocol, LCR blockchains can have blocks confirmed very quickly but because of the possibility of forks, consensus may not be final. 

Forks will arise simply because there may be lags in the communication of confirmed blocks to chains and different sets of nodes may work on distinct blocks for a time. One procedure POS blockchain protocols use to limit this lack of coordination is to punish nodes who work on more than one chain. This can happen because nodes have the potential to earn rewards for confirming blocks regardless of the chain that arises and so do not necessarily have an incentive to select one over another.\footnote{\cite{saleh2021blockchain} argues that a proposing node does have an incentive as only the chain that survives as the longest chain will earn them a block reward while persistent lack of coordination devalues tokens and hence those rewards and stakes held.} Blockchains resort to slashing (that is, fining nodes some portion of their stake) to incentivise nodes to work on one chain.\footnote{In POW protocols working on more than one chain requires actual resource expenditures rather than simply staking digital tokens so this incentive issues does not arise.}

While forks may be considered a necessary inconvenience, because they can arise as a matter of course, there is potential for nodes to create forks for their own purposes. For instance, a node may create a blockchain fork in order to nullify transactions previous confirmed to past blocks. The reason they might do this is in order to allow a transaction that would otherwise be invalid to current or future blocks. The most famous example of this is the `double-spend' problem whereby an agent attempts to spend their own tokens twice by confirming a transaction to a past block and then attempting to have that transaction omitted so they can spend those tokens again. This type of intervention relies on some agents relying on the past block confirmation to trigger, say, the transfer of real world goods to the agent. Given this \cite{nakamoto2008bitcoin} recommended that agents not consider transactions economically final until a certain number of blocks had been confirmed to the chain following the block in question. However, the potential to revert or nullify past transactions reduces the reliability of the blockchain and the speed at which transactions and messages can be accepted as final.

It is instructive to explain in more detail how an attack -- such as a `double-spend' attack -- actually works on POS with the LCR rule. An attacker privately works on a chain that nullifies a transaction in an already confirmed block that is part of the main chain. The transaction is removed so that the tokens remain in the attackers account to be spend again. The goal is then to surface the private chain once it is as long as (or longer than) the current main chain; creating a fork. In proof of work, doing this is non-trivial as this requires, at least, expending more electricity than those nodes working on the main chain. In proof of stake, as only the tokens of the attacker are staked on the private chain, this is simpler. What is more, the attacker, should they succeed, also receives the block rewards, $R$, associated with the number of blocks, $k$, they have worked on privately. Thus, their total reward from succeeding in this attack is $kR+\theta$ (where $\theta$ is the benefit from the double spend).

The attack is made more difficult if there are robust penalties to other nodes from staking tokens to both chains in a fork. If other nodes have ``nothing at stake," they do not have an incentive to `pick sides' in the fork. If, eventually, nodes that stake tokens in both chains can be detected, they can be fined an amount that disincentivizes that behavior. However, detecting this can be difficult (\cite{bentov2016cryptocurrencies}, \cite{brown2019formal}). Another approach, to be employed in Ethereum's Casper upgrade, is to punish nodes who stake on the `wrong' chain.\footnote{The means of tracking these outcomes involves the use of code called a `dunkle'; see \url{https://eth.wiki/concepts/proof-of-stake-faqs0.}} This incentivizes nodes to do more work in establishing what they believe the right chain to be. However, this could be a significant challenge in that most nodes may not want to play an active role in resolving forks and would prefer to following the LCR. Thus, at best, current mechanisms deter nodes from staking on different chains and so make it harder for an attacker to create an alternative longest chain. However, there is no real mechanism to determine which chain is the `true' chain. That is where we believe a mechanism can help.

The potential for a mechanism that might resolve blockchain forks arises because it is difficult to tell if they are accidental or deliberate. In the latter case, they are akin to an ownership dispute. A natural question is whether there is a mechanism that can quickly determine what the ``true'' chain is? Adapting our analysis of Solomonic disputes (\cite{gansholden2022}), we answer this question in the affirmative.

A fork is only consequential if different nodes claim that said fork is the correct one. Thus, there are competing ownership claims about the ``true'' chain. For simplicity call the competing forks $A$ and $B$. The information structure is such that the nodes claiming the fork $A$ is the true one know they are honest. Nodes claiming that $B$ is the true chain know they are not honest.

\subsection{A Solomonic Mechanism}

Now consider the following mechanism. If a fork appears (without loss of generality, $B$) and is within $x$ of the same number of blocks as fork $A$ then the following mechanism is run between nodes that claim to hold a full record of the blockchain.

\begin{enumerate}
    \item For each fork, the blocks subsequent to the last common parent are unpacked and transactions are compared. 
    \item Valid messages that appear in both sets are immediately confirmed and at the minimum of the time stamp between the two forks.\footnote{Valid messages will have different time stamps but otherwise the same content. However, as they cover multiple blocks there are some practical issues to resolve in comparing messages as being equivalent on each chain of the fork.} Other messages are collected and marked as disputed. 
    \item One node from each chain is selected at random ($a$ for $A$ and $b$ for $B$). The node from the chain where a transaction does not appear is asked to confirm that the transaction is invalid. In this case, both are fined $F$, and they enter the dispute stage for each disputed transaction. 
\end{enumerate}

\noindent The \textbf{dispute stage} involves:
\begin{enumerate}
    \item If the transaction appears in $A$ and not in $B$, $a$ is asked to assert the legitimacy of the transaction. If $a$ asserts, then the transaction remains and the fine is burned. If $a$ does not assert, the transaction is discarded and $b$ has their fine refunded.
    \item If the transaction appears in $B$ and not in $A$, $b$ is asked to assert the legitimacy of the transaction. If $b$ asserts, then the transaction remains and the fine is burned. If $b$ does not assert, the transaction is discarded and $a$ has their fine refunded.
\end{enumerate}

\noindent Note that there may be numerous transactions that appear in one chain and not the other. The procedure here, between the two selected nodes, would be conducted for each disputed transaction with the roles assigned depending on which chain the disputed transaction appears. 

We now need to specify preferences of each type of node. An honest node has an interest in preserving the true blockchain. That preference has a monetary equivalent value of $H$ and they have a disutility arising from another blockchain being built upon of $D$ with $H > D$. By contrast, a dishonest node is only interested in having their preferred chain continue for which they receive a monetary equivalent value of $\theta$. 

Given this, we can now prove the following:

\begin{proposition}
Dishonest forks do not arise in any subgame perfect equilibrium where at least one honest node is selected.
\end{proposition}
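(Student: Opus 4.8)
The plan is to prove the statement by backward induction through the two stages of the mechanism and then back to the attacker's ex-ante decision to fork, exploiting the informational asymmetry that the node drawn from the true chain $A$ is honest while the node drawn from $B$ is dishonest. Throughout I would fix a double-spend-style attack in which the true chain $A$ contains a legitimate transaction $T$ that the dishonest fork $B$ omits, so that $T$ is a disputed transaction appearing in $A$ but not $B$. Since the paper specifies that the dispute stage is run separately for each disputed transaction, the general case reduces to this one transaction-by-transaction.

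First I would solve the dispute stage. Conditioning on having reached it, the honest node $a$ (drawn from $A$, whose nodes know they are honest, and who is present by the hypothesis that at least one honest node is selected) is asked to assert the legitimacy of $T$. Asserting yields $H - F$, since the true chain is preserved (valued at $H$) and the fine is burned, whereas declining yields $-D - F$, since a false chain is built upon (disutility $D$) and $a$'s fine is not refunded. Because $H > 0 > -D$, asserting is strictly optimal, so a true transaction can never be removed in the dispute stage. This is also where I would locate the role of the maintained assumption $H > D$, which secures the honest node's preference for preserving truth over merely avoiding the rival chain.

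Next I would back up to the confirmation step. Anticipating that $a$ will assert, the dishonest node $b$ gains nothing by challenging $T$: disputing costs $F$ and still leaves $T$ confirmed, so the double spend fails and $\theta$ is not realized, giving $-F$; not disputing confirms $T$ at a payoff of $0$. Hence $b$ does not dispute and $T$ is written to the chain. Finally, rolling back to the ex-ante stage, the attacker foresees that no distinguishing transaction of $B$ can survive against an honest $a$, so the fork cannot be resolved in its favor and the benefit $\theta$ (together with the forgone block rewards $kR$) is unattainable; the payoff to forking collapses to at best the payoff from following $A$. Thus no dishonest fork is created on the equilibrium path, and any off-path fork is resolved in favor of the truth.

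The main obstacle I anticipate is the mirror-image direction in which a transaction appears in $B$ but not in $A$ — for instance a re-spend inserted by the attacker — because there the mechanism asks the dishonest node $b$ to assert, and a dishonest $b$ would happily assert a false transaction. To close this I would invoke the cryptographic restriction on the message space emphasized earlier in the paper: a re-spend of tokens already committed in the preserved $T$ is not a valid message and is filtered at the comparison and validity step, so it cannot be sustained by assertion once $T$ stands. A secondary subtlety is making the ex-ante comparison fully precise — specifying that constructing the private chain $B$ of $k$ blocks is (weakly) costly and that its reward $kR+\theta$ is forfeited once the fork is overruled — so that forking is weakly dominated by honest participation and therefore does not arise in any subgame perfect equilibrium.
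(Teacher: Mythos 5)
Your backward-induction core is exactly the paper's argument: in the dispute stage the honest node $a$ asserts because $H-F$ beats the payoff from letting the rival chain stand (the paper writes this as $H-F$ versus $D-F$ with $H>D$; your $H-F$ versus $-D-F$ is the same comparison under a sign convention for the disutility), so anticipating this the dishonest $b$ declines to confirm that the transaction is invalid (disputing costs $F$ and still fails), and rolling back one more step the attack is never launched. That matches the paper's proof step for step. One small omission: the paper's proof also checks the accidental-fork case in which both selected nodes are honest (the transaction is missing from $B$ due to latency), where $b$, knowing that attacks only ever remove transactions, declines to contest and the transaction survives without fines. You should include that case for completeness, though it is not the heart of the proposition.

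The genuine problem is your handling of the mirror-image direction, where a transaction appears in $B$ but not in $A$. You propose to close this by arguing that a re-spend is cryptographically invalid and filtered at the validity-check step. That does not work, and the paper itself says so: in Section 3.3 the authors show that an attacker can fork from an \emph{earlier} block and insert a pre-spend of the tokens at a point where, on chain $B$, the tokens are still in the attacker's wallet, so the inserted transaction is perfectly valid on that fork and survives the validity filter. Because the mechanism then hands the assertion decision to $b$ (the node on the chain where the transaction \emph{does} appear) and defaults to preserving asserted transactions, the pre-spend is preserved and the original spend on $A$ is invalidated downstream -- the attack succeeds against this mechanism. The paper does not claim Proposition 2 covers this case; its proof explicitly restricts to a fork with a transaction \emph{omitted}, and it constructs a separate, modified mechanism (Proposition 3, based on comparing terminal token allocations rather than transaction sets) to deal with insertion-style attacks. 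So you should either scope your proof the way the paper does -- to omission attacks -- or drop the cryptographic-filtering claim, which is false as stated.
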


\begin{proof}
Without loss in generality, suppose that $A$ is the true blockchain and a dishonest fork, $B$, arises with a transaction omitted from a past block. If $b$ confirms that the transaction is invalid, $a$ and $b$ are then selected and fined and the dispute stage begins. In the dispute stage, $a$ is asked to assert or not assert the legitimacy of the transaction. If $a$ asserts, as the transaction is valid and $a$ is honest, $a$ receives $H-F$ and $b$'s payoff remains at $-F$. If $a$ does not assert, $a$ receives $D-F$ and $b$'s payoff becomes $\theta$. In this case, $a$ has a preference to assert. 

In the first round, anticipating this, $b$ will decline to confirm the removal of the transaction as this will result in a fine of $F$ for sure and no transaction being removed. Given that, it is not worthwhile attempting the attack in the first place. 

Now suppose that there are no dishonest nodes and that, in this case, $b$ is a node on a chain where the transaction does not appear (e.g., it may have been missed because of network issues). In this case, as $b$ knows they are honest and that dishonest nodes only attempt to remove transactions and not place them, $b$ is choose to not assert the transaction is invalid. Thus, the transaction will remain and neither node will be fined. 
\end{proof}
\\

\noindent The intuition for this result is simple. The mechanism is designed based on the notion that (1) nodes have information as to which chain they regard as truthful and which is not; and (2) the attacker is trying to have a past transaction/message removed from the blockchain. The mechanism gives the opportunity for a node that had staked on each chain to be matched and to confirm whether there is a dispute over a transaction that appears in one but not in the other. If there is an agreement, the transaction is confirmed or removed as the case may be and the mechanism ends. If there is a dispute, however, both nodes are fined which creates an incentive to avoid the dispute. The dispute stage then focusses on the node from the chain that asserts the transaction is valid. As an attack involves a dishonest node trying to remove a transaction, this node is presumptively honest and so are given control over the decision. Thus, the transaction remains. Given this is the outcome of the dispute a dishonest node will not trigger the dispute (as they will be fined) nor create a dishonest fork because that is costly and will be unsuccessful. 

Importantly, the mechanism takes into account the possibility that chains have arise as a result of, say, network latency issues, rather than an attack. In this case, both nodes are honest. Given that nodes know that attacks involve the removal of transactions, when the mechanism is triggered by an accidental fork, the node from the chain without the transaction will agree their fork should be discarded if the transaction is, indeed, valid. Thus, the `correct' fork persists.

Nonetheless, as the proposition qualifies, the mechanism does not prevent dishonest forks from arising when neither selected node is honest. This would require an attacker to distribute nodes across both forked chains; a possibility we consider next.

\subsection{Multi-Node Attacks}

As noted earlier, the `traditional' strategy for a double-spend attack on POS with the LCR is for an attacker to create private chain without the transactions that appear in the main chain and then build that change to be at least as long, and perhaps longer, than the main chain. The idea of the Solomonic mechanism is that when a fork appears, regardless of the length of competing chains, the mechanism resolves the dispute. As Proposition 2 shows that, so long as one honest node is selected to be part of the dyad in the mechanism (which will happen if the attacker has concentrated its resources to build the private chain), the dishonest chain will be discarded. Hence, if it is expected that an honest node will be selected, there is no incentive for an attacker to create a dishonest fork in the first place. 

The Solomonic mechanism is engaged as a mechanism to resolve forks in lieu of a race to produce the longest chain. One implication of this is that there is no longer a concern about the `nothing-at-stake' problem. Recall this problem arose because there were incentives for nodes to stake on two chains that were part of a fork which would have the implication of slowing down the resolution of one chain into the longest chain. At the extreme, the nothing-at-stake problem could make it easier for an attacker to create the longest chain. Hence, current POS networks using the LCR, create penalties for nodes who stake on multiple networks. However, when a Solomonic mechanism is used to resolve which chain should be confirmed, whether nodes stake on multiple chains is no longer a concern and, therefore, there is no reason to discourage this, let alone impose a slash penalty. Indeed, as will be argued here, there is reason to encourage multiple chain staking.

The reason for this is that an attacker has an interest when the mechanism is in place to try and control the mechanism by having its own nodes selected for the mechanism excluding an honest node. If this happens, the attack can be successful and it is possible a dishonest chain is confirmed. Thus, in contrast to a protocol operating under the LCR, with a Solomonic mechanism an attacker has an incentive to stake all of its nodes on both chains.

Suppose that an attacker has a share, $s$, of all nodes. If $A$ is the main chain and $B$ is the dishonest fork, as all nodes -- honest or not -- stake on each chain, the attacker may `control the mechanism' with probability, $s^2$. This assumes that the attacker does not know when engaging with the mechanism in the first round whether the other node is one of their own. This can be achieved by sequentially drawing those nodes with the first node forced to commit to a message before the second node is identified.

What does this do to the incentive to engage in a double-spend attack? Suppose that the fork is $k$ blocks long. Recall that $B$, as it was produced by the attacker, will result in block rewards of $kR$ being awarded to them if $B$ persists. If $A$ persists then that agent receives $skR$ in expected block rewards. Thus, the expected return to malicious agent is:
$$(1-s)skR+s\Big(-F+s(kR+\theta)+(1-s)skR\Big)$$
Note that, if the attacker does not attack, there is no fork and the attacker earns $skR$. Comparing this to the return to an attack, we can see that an attack will only take place if:
$$s > \frac{1}{2}+ \frac{\theta -\sqrt{(\theta +k R)^2-4 F k R}}{2 k R}$$
so long as $\frac{(k R + \theta)^2}{4kR} \ge F$. The higher is $F$, the less likely these conditions are satisfied and the less likely an attack occurs. Note that so long as $\theta \ge \frac{4F-kR}{2}$, the threshold to deter an attack is higher than the usual condition for LCR consensus that $s < \frac{1}{2}$. Thus, as with our mechanism for BFT consensus, here a mechanism has the potential to provide more security depending upon the choices of $F$ and $R$.

\subsection{Dealing with a Non-Traditional Double Spend Attack}

The above mechanism involves eliminating forks created by traditional double-spend (or double-spend like) attacks. The traditional double spend attack involves an attacker confirming a transaction that spends their tokens to the original chain and then removing that transaction with a dishonest chain in the hope of getting that latter chain accepted as consensus. This allows the attacker to spend those tokens again. The current Solomonic mechanism subverts that attack by making it impossible for the attacker to remove the original confirmed transaction as part of a forked chain.

The mechanism relies on the notion that attacks involve transactions being removed rather than retained. Thus, honest nodes have an incentive to defend transactions that are confirmed and dispute transactions that are proposed to be removed. The latter is also consistent with the effect of latency issues that may cause a fork when some nodes `miss' a transaction.

However, what if the attacker modifies their attack? Recall that the attacker wants to nullify a past transaction involving tokens spent by the attacker. While forking the chain is a potential way to do this, there are others. What if, for instance, the past transaction can be held to be invalid?

Normally, by virtue of being confirmed, the past transaction is valid -- that is, there are tokens in the wallet that can be sent to another wallet at that time. But if the attacker is forking a chain, the attacker, under POS, can fork the chain from an earlier block. In that block, the attacker can insert a transaction that spends the tokens (aka moves them to another wallet perhaps under the attacker's control) prior to the past transaction the attacker is trying to remove. If that is done as part of a fork, when the Solomonic mechanism is run, there will be two disputed transactions: one that appears on the dishonest chain but not on the original chain (the pre-spend) and one that appears on the original chain but the dishonest one (the original spend). When there are multiple disputed transactions, a natural way to run the mechanism would be to run it on the earliest disputed transaction first. The reason for this is that how that is resolved may have implications for future transactions. In fact, that is precisely what the attacker here is anticipating. Thus, the mechanism would, given the way it defaults to preserving transactions, preserve the pre-spend transaction and, by virtue of that, make the original spend transaction invalid. In other words, this attack can now succeed in generating the conditions of a double-spending of the tokens in question.

The problem here is that the current mechanism relies on (1) that honest nodes know they are honest and (2) that the only way an attack would work would be to nullify a transaction and honest nodes know that. However, (2) is overcome in the proposed attack and thus, the mechanism fails. In order to restore a mechanism here (2) would have to be replaced by conditions that the honest knew that the original chain was the `true' chain. 

What would those conditions be? Another way of looking at the fork chains is not transaction by transaction (to see which are disputed) but by paths of tokens. In the original chain, tokens may be confirmed to move from Charles to Alice while in the new chain tokens move from Charles to Bob. Thus, the dispute is over the end point of the tokens (or more generally, the path they have taken). In the case here, who owns the tokens now, Alice or Bob?

To resolve this, we propose using a variant of the Solomonic mechanism. The first step involves analysing the forked chains but instead of focusing on transactions that appear in one chain but not the other, the focus is on differences in the allocation of tokens in the last confirmed block of each chain (where we assume here that both chains are of equal length). If there has been a general double-spend attack, tokens that, say, were originally confirmed to be held by Alice would instead by held by Bob. 

The mechanism is as follows: If a fork appears (without loss of generality, $B$) and is within $x$ of the same number of blocks as fork $A$ then the following mechanism is run between nodes that claim to hold a full record of the blockchain.

\begin{enumerate}
    \item For each fork, the allocations of tokens confirmed on the last block are compared. 
    \item Transactions that lead to allocations that are in both sets are immediately confirmed at their original block time stamp. Other allocations are marked as disputed.
    \item One node from each chain is selected at random ($a$ for $A$ and $b$ for $B$). Each node is asked to confirm the current allocation of their chain based on their held full blockchain record. If they each agree on one chain, that chain is confirmed and the mechanism ends. If there is a disagreement, both are fined $F$, and they enter the dispute stage. 
\end{enumerate}

\noindent The \textbf{dispute stage} involves:
\begin{enumerate}
    \item One of the two nodes is selected at random and given the opportunity to assert the validity of their chain. 
    \item If they do not assert the chain, the tokens are confirmed to the account on the other chain.
    \item If they assert the chain, the tokens are burned.  
\end{enumerate}

We will assume that an attacker weakly prefers leaving tokens in the account on the original chain than burning those tokens.\footnote{Even with a double spend attack, the attacker faces some legal risk associated with being identified as the attacker which may happen if the tokens are missing from their original spend.}  Given this, we can prove the following.

\begin{proposition}
In any subgame perfect equilibrium, so long as at least one of the nodes participating in the mechanism is honest, a dishonest chain is never confirmed.
\end{proposition}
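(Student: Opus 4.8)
The plan is to solve the mechanism by backward induction, working from the dispute stage back to the initial confirmation reports, in the same spirit as the proof of Proposition 2. Fix notation: without loss of generality let $A$ be the true chain, so the disputed tokens belong to Alice on $A$ and to Bob on $B$; let $\theta$ be the attacker's gain if the tokens are confirmed to Bob (equivalently, if $B$ is confirmed); and recall the maintained assumption that the attacker weakly prefers the tokens resting in their original account (Alice) to having them burned. I would first record the honest node's preference ranking that the argument will use, which is natural given the $H>D$ structure of Section 3.1: the true allocation (Alice) is strictly best, and \emph{burning} the tokens is strictly preferred to ceding them to the dishonest account (Bob), since burning denies the attacker the double-spend and prevents the dishonest chain from being built upon. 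Thus an honest node is willing to ``scorch the earth'' rather than let the attacker succeed.

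Next I would characterize the dispute stage as a function of the type and role of the node nature selects to assert, using the rule that asserting one's chain burns the tokens while not asserting cedes them to the other chain's account. There are four contingencies. An honest node defending the true chain $A$ asserts, triggering a burn, because not asserting would cede the tokens to Bob, which it ranks below burning. An honest node defending the fork $B$ does not assert, ceding to Alice, its most preferred outcome. An attacker defending the fork $B$ likewise does not assert, since ceding to Alice is preferred by assumption to the burn produced by asserting. I would then roll these actions back to the confirmation step, where any disagreement costs both nodes the fine $F$, and show that whenever the honest node is the one defending the true chain, an attacker can never force confirmation of Bob: whichever node the dispute selects, the outcome is a burn (honest $A$-defender asserts) or Alice (attacker-as-$B$-defender cedes), never Bob, so triggering the dispute only wastes $F$ and the true allocation is confirmed. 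Anticipating this, the attacker never creates the fork. The accidental-fork case, in which both nodes are honest, is then immediate: both rank Alice first, they agree on the true allocation, no fine is levied, and no dishonest chain is confirmed.

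The main obstacle is the one remaining contingency: an attacker drawn to \emph{represent the true chain} $A$ while the honest node represents the fork $B$. Here not asserting cedes to Bob and delivers $\theta$, so this attacker would ``throw the match'' by declining to assert, while the honest node, confined to the fork side, can only cede to Alice or burn when it is selected — a naive count then yields Bob with probability one half, which would contradict the claim for large $\theta$. Closing this gap is the crux. I would handle it by pinning down which chain a node can credibly represent: because each chain's terminal allocation is cryptographically verifiable from the public record, a node seated as $A$'s representative cannot pass off Bob's allocation as $A$'s, and, more importantly, an honest node will not stake on and thereby lend a defense to a chain it knows to be dishonest, so the fork's representative is never honest. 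Consequently the honest node, when present, always occupies the true-chain seat, the dangerous contingency cannot co-occur with an honest node, and the proposition follows. I would flag explicitly that the strength of the result rests on exactly this restriction on which chains an honest node is willing to represent; relaxing it (e.g., if honest nodes are induced to stake on both forks) reintroduces the one-half failure probability and is the place where the argument is most fragile.
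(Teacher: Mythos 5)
Your backward induction and the four-way dispute-stage case analysis coincide with the paper's: an honest $a$ asserts and burns rather than cede to Bob, an honest or dishonest $b$ declines to assert and cedes to Alice, and the rollback to the confirmation round then shows a dishonest node agrees on $A$ to avoid the fine. Where you genuinely diverge is in taking seriously the contingency the paper's own proof elides: a dishonest node seated as $a$ (representing the true chain) facing an honest $b$. The paper's own case 2 concedes that such an $a$, if selected in the dispute stage, declines to assert and gets $B$ confirmed; yet its first-round rollback claims the dispute stage ``will result in either the tokens remaining on $A$ or the tokens being burned,'' which is only true when the dishonest node occupies the $B$ seat. You are right that in the dishonest-$a$/honest-$b$ configuration the attacker wins with probability one half by forcing a dispute, so for $\theta$ large relative to $F$ the proposition as stated needs something extra.

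Your patch --- that an honest node never lends its name to the fork, so the honest node, when present, always holds the $A$ seat --- does close the gap, and you are right to flag it as the load-bearing step. Be aware, though, that it is an assumption nowhere in the paper and sits awkwardly with Section 3.2, where the paper argues that \emph{all} nodes, honest or not, should stake on both chains; under that staking behaviour an honest node can perfectly well be drawn as $b$. Your cryptographic-verifiability justification also does not really bite, since the problem is not $a$ misreporting $A$'s allocation but $a$ strategically declining to assert. A fix that stays inside the mechanism, mirroring the role asymmetry of the Proposition 2 mechanism, would be to always put the assert-or-burn decision to the $B$-side representative (equivalently, make the no-assert default send the tokens to the $A$-account): then every type of $b$ weakly prefers not to assert, $B$ can never be confirmed out of the dispute stage, and the first-round argument goes through under the stated hypothesis. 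As written, your proof is more candid about what the result requires than the paper's own.
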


\begin{proof}
Without loss in generality let $A$ be the true fork and $B$ the dishonest fork. Working backwards, consider the dispute stage. There are four cases to consider:
\begin{enumerate}
    \item If $a$ (the node selected from $A$) is honest, if that agent asserts that $A$ is the true chain, this will result in the tokens being burned. $a$ receives $ - F$ as a payoff. If $a$ chooses not to assert that $A$ is true, $a$ receives $D-F$ which is a lower amount. Thus, $a$ asserts $A$ is true.
    \item If $a$ (the node selected from $A$) is dishonest, that agent will assert that $B$ is the true chain. This will result $B$ being confirmed which is what a dishonest node wants. 
    \item If $b$ (the node selected from $B$) is honest, that agent will assert that $A$ is the true chain. This will result in $A$ being confirmed which is what an honest node wants.
    \item If $b$ (the node selected from $B$) is dishonest, if that agent asserts that $B$ is the true chain, this will result tokens being burned. If they assert that $A$ is the true chain, this will result in $A$ being confirmed.
\end{enumerate}
Now consider the choice of whether of which chain to claim in the first round. An honest node will always find it worthwhile to assert that $A$ is the true chain. A dishonest node, under the assumptions of the Proposition, knows the other node is an honest node. If that node asserts that $A$ is true, the original chain is confirmed. If that node asserts that $B$ is true, the dispute chain begins. This will result in either the tokens remaining on $A$ or the tokens being burned. So long as the former is weakly preferred by dishonest nodes, that node will claim $A$ is the true chain in order to avoid the fine, $F$. This proves the proposition.
\end{proof}
\\

\noindent As before, the presence of an honest node, prevents the attacker from achieving their aims to have the tokens allocated as in $B$ rather than $A$. If the attack involves any cost, the attack will, therefore, not take place.

That said, this presumes that an honest node is part of the mechanism. This may not happen and the attacker may be able to control both nodes in the mechanism. Put differently this mechanism relies on honest nodes being able to tell which is the correct chain and hence, who rightfully owns the tokens.

This is an assumption we make here, and there may be practical difficulties in nodes being able to have a record of the blockchain and which chain was the main chain in the immediate past. That said, there is good reason to think that this assumption is justified. Honest nodes have a lot of information when they see a fork going back many transactions in the chain.

While there will always be forks due to latency issues, there will never be a ``dishonest fork'' on the equilibrium path of the game induced by our mechanism. Indeed, provided that the honest nodes have a higher prior that the correct chain is, indeed, the correct chain -- that is, priors are biased towards the truth -- the mechanism here will still have them acting as if the correct chain is the truth and will still deter attempts to confirm a dishonest chain as the consensus chain.

\section{Concluding Remarks}

We have shown how to construct revelation mechanisms to achieve consensus on blockchains under BFT and LCR. A fundamental pillar of a mechanism-design approach to blockchain consensus is the use of information. Consistent with the central dogma of mechanism design, the designer is not presumed to posses more information than that held by agents in the mechanism. This contrasts, however, with some uses of slashing in existing approaches to consensus. Yet, through careful design choices, the mechanisms we examine here make very efficient of the information that is held by existing nodes. We have also discussed the robustness of these mechanisms to multi-node attacks.

We are of the view that mechanism-designed-based consensus protocols have important advantages over existing POS protocols, and that they are likely to be of practical use as an alternative to POW protocols. In particular, rather than just being designed to satisfy stringent requirements of finality and liveness (themselves often probabilistic), mechanisms have choice parameters (e.g., the reward, $R$, and fines, $F$) that themselves can manipulate trade-offs between finality and liveness at the margin depending on the environment and preferences of blockchain users. This expands the set of blockchain consensus options for participants.

\newpage

\typeout{}
\bibliography{references}
\bibliographystyle{apalike}

\end{document}